\pgfplotsset{compat=1.7}
\newtheorem{lemma}{Remark}
\newtheorem{proposition}{Lemma}
\DeclareMathOperator{\diag}{diag}
\definecolor{blue}{rgb}{0,0,0}
\newabbreviation{OMA}{OMA}{orthogonal multiple access}
\newabbreviation{NOMA}{NOMA}{Non-orthogonal multiple access}
\newabbreviation{RIS}{RIS}{Reconfigurable intelligent surfaces}
\newabbreviation{BS}{BS}{base station}
\newabbreviation{LoS}{LoS}{Line-of-sight}
\newabbreviation{3GPP}{3GPP}{$3^{rd}$ Generation Partnership Project}
\newabbreviation{AoD}{AoD}{angle of departure}
\newabbreviation{SNR}{SNR}{signal-to-noise ratio}
\newabbreviation{ASR}{ASR}{achievable sum rate}
\newabbreviation{PDF}{PDF}{probability distribution function}
\newabbreviation{CDF}{CDF}{cumulative distribution function}
\newabbreviation{5G}{5G}{fifth-generation}
\newabbreviation{LDR}{LDR}{Lagrangian dual reformulation}
\newabbreviation{FPA}{FPA}{fair power allocation}
\newabbreviation{AUP}{AUP}{adaptive user pairing}
\newabbreviation{EE}{EE}{energy efficient}
\newabbreviation{AMO}{AMO}{alternating manifold optimization}
\newabbreviation{SINR}{SINR}{signal-to-interference-plus-noise ratio}
\newacronym{thetaRA}{\ensuremath{\bold{\theta_{\scriptscriptstyle AOA}^{\scriptscriptstyle B,R}}}}{}
\newacronym{thetaRD}{\ensuremath{\bold{\theta_{\scriptscriptstyle AOD}^{\scriptscriptstyle R,k}}}}{}
\newacronym{phiRA}{\ensuremath{\bold{\phi_{\scriptscriptstyle AOA}^{\scriptscriptstyle B,R}}}}{}
\newacronym{phiRD}{\ensuremath{\bold{\phi_{\scriptscriptstyle AOD}^{\scriptscriptstyle R,k}}}}{}
\newacronym{thetaBDR}{\ensuremath{\bold{\theta_{\scriptscriptstyle AOD}^{\scriptscriptstyle B,R}}}}{}
\newacronym{thetaBDk}{\ensuremath{\bold{\theta_{\scriptscriptstyle AOD}^{\scriptscriptstyle B,k}}}}{}
\newacronym{phiBDR}{\ensuremath{\bold{\phi_{\scriptscriptstyle AOD}^{\scriptscriptstyle B,R}}}}{}
\newacronym{phiBDk}{\ensuremath{\bold{\phi_{\scriptscriptstyle AOD}^{\scriptscriptstyle B,k}}}}{}
\newacronym{thetaUAB}{\ensuremath{\bold{\theta_{\scriptscriptstyle AOA}^{\scriptscriptstyle B,k}}}}{}
\newacronym{thetaUAR}{\ensuremath{\bold{\theta_{\scriptscriptstyle AOD}^{\scriptscriptstyle R,k}}}}{}
\newacronym{phiUAB}{\ensuremath{\bold{\phi_{\scriptscriptstyle AOA}^{\scriptscriptstyle B,k}}}}{}
\newacronym{phiUAR}{\ensuremath{\bold{\phi_{\scriptscriptstyle AOD}^{\scriptscriptstyle R,k}}}}{}
\newacronym{aB}{\ensuremath{\bold{a_{\scriptscriptstyle B}}}}{}
\newacronym{aR}{\ensuremath{\bold{a_{\scriptscriptstyle R}}}}{}
\newacronym{au}{\ensuremath{\bold{a_{\scriptscriptstyle k}}}}{}
\newacronym{rB}{\ensuremath{\bold{r_{\scriptscriptstyle B}}}}{}
\newacronym{rR}{\ensuremath{\bold{r_{\scriptscriptstyle R}}}}{}
\newacronym{rbr}{\ensuremath{\bold{r_{\scriptscriptstyle B,R}}}}{}
\newacronym{rbk}{\ensuremath{\bold{r_{\scriptscriptstyle B,k}}}}{}
\newacronym{rrk}{\ensuremath{\bold{r_{\scriptscriptstyle R,k}}}}{}
\newacronym{rrb}{\ensuremath{\bold{r_{\scriptscriptstyle R,B}}}}{}
\newacronym{rkb}{\ensuremath{\bold{r_{\scriptscriptstyle k,B}}}}{}
\newacronym{rkr}{\ensuremath{\bold{r_{\scriptscriptstyle k,R}}}}{}
\newacronym{dbr}{\ensuremath{\bold{d_{\scriptscriptstyle B,R}}}}{}
\newacronym{dbk}{\ensuremath{\bold{d_{\scriptscriptstyle B,k}}}}{}
\newacronym{drk}{\ensuremath{\bold{d_{\scriptscriptstyle R,k}}}}{}
\newacronym{drb}{\ensuremath{\bold{d_{\scriptscriptstyle R,B}}}}{}
\newacronym{dkb}{\ensuremath{\bold{d_{\scriptscriptstyle k,B}}}}{}
\newacronym{dkr}{\ensuremath{\bold{d_{\scriptscriptstyle k,R}}}}{}
\newacronym{dB}{\ensuremath{\bold{d_{BS}}}}{}
\newacronym{dR}{\ensuremath{\bold{d_{RIS}}}}{}
\newacronym{du}{\ensuremath{\bold{d_{k}}}}{}
\newacronym{M}{\ensuremath{M}}{Number of antennae at the base station}
\newacronym{N}{\ensuremath{N}}{Number of antennae at the RIS}
\newacronym{U}{\ensuremath{U}}{Number of antennae at the user}
\newacronym{G}{\ensuremath{\bold{G}}}{Channel between BS to RIS}
\newacronym{F}{\ensuremath{\bold{F}_{k}^H}}{Channel between RIS and user $i$}
\newacronym{H}{\ensuremath{\bold{H}_{k}^H}}{Number of antennae at the IRS}
\newacronym{h}{\ensuremath{\bold{h}_k^H}}{Channel between IRS and user $2$}
\newacronym{Theta}{\ensuremath{\bold{\Theta}}}{Signal strength of the link between BS to IRS}
\newacronym{ThetaO}{\ensuremath{\bold{\Theta}}_\text{opt}}{Signal strength of the link between BS to IRS}
\newacronym{ThetaD}{\ensuremath{\bold{\Theta}}^\text{D}_\text{opt}}{Signal strength of the link between BS to IRS}
\newacronym{ThetaE}{\ensuremath{\bold{\Theta}}_\text{e}}{Signal strength of the link between BS to IRS}
\newacronym{Fst}{\ensuremath{\mathcal{F}}}{Signal strength of the link between IRS to the user $i$}
\newacronym{gamma}{\ensuremath{{\Gamma}_k}}{Channel between IRS and user $1$}
\newacronym{gammaOR}{\ensuremath{{\Gamma}_k}^\text{Only RIS}}{Channel between IRS and user $1$}
\newacronym{gammaOpt}{\ensuremath{{\Gamma}_k}^\text{Opt}}{Channel between IRS and user $1$}
\newacronym{gammaNR}{\ensuremath{{\Gamma}_k^\text{No RIS}}}{Channel between IRS and user $1$}
\newacronym{Rasr}{\ensuremath{R_\text{sum}}}{Signal strength of the link between IRS to the user $1$}
\newacronym{Rasro}{\ensuremath{R_\text{sum}^\text{\tiny OMA}}}{Signal strength of the link between IRS to the user $1$}
\newacronym{gammao}{\ensuremath{{\Gamma}_k^\text{\tiny OMA}}}{Signal strength of the link between IRS to the user $1$}
\newacronym{Pmax}{\ensuremath{P_t}}{Signal strength of the link between IRS to the user $2$}
\newacronym{deltaMSD}{\ensuremath{\Delta_{k,k-1}^\text{\tiny MSD}}}{Azimuth angle of arrival at the IRS}
\newacronym{DepIa}{\ensuremath{\psi_I^a}}{Azimuth angle of departure from the IRS}
\newacronym{ArrIe}{\ensuremath{\phi_I^e}}{Zenith angle of arrival at the IRS}
\newacronym{DepIe}{\ensuremath{\psi_I^e}}{Zenith angle of departure from the IRS}
\newacronym{DepBa}{\ensuremath{\psi_B^a}}{Azimuth angle of departure from the BS}
\newacronym{DepBe}{\ensuremath{\psi_B^e}}{Zenith angle of departure from the BS}
\newacronym{B}{\ensuremath{B}}{Number of beams}
\newacronym{ai}{\ensuremath{\alpha_i}}{Power allocated to strong user}
\newacronym{a1F}{\ensuremath{\alpha_1^\text{\tiny FPA}}}{Power allocated to strong user}
\newacronym{a1M}{\ensuremath{\alpha_1^\text{\tiny MPA}}}{Power allocated to strong user}
\newacronym{a2}{\ensuremath{\alpha_2}}{Power allocated to weak user}
\newacronym{Pt}{\ensuremath{P_t}}{Total available transmit power at the BS}
\newacronym{DiagI}{\ensuremath{\bold{\Theta}}}{IRS reflection diagonal matrix}
\newacronym{DiagI'}{\ensuremath{\bold{\widetilde{\Theta}}}}{IRS reflection diagonal matrix with imperfect phase compensation}
\newacronym{delta}{\ensuremath{\delta}}{Maximum possible phase noise}
\newacronym{GammaW}{\ensuremath{\gamma_2^\text{\tiny NOMA}}}{SINR of weak user in a NOMA system}
\newacronym{GammaS}{\ensuremath{\gamma_1^\text{\tiny NOMA}}}{SINR of strong user in a NOMA system}
\newacronym{GammaI}{\ensuremath{\gamma_i^\text{\tiny NOMA}}}{SINR of $i^{th}$ user in a NOMA system}
\newacronym{RateW}{\ensuremath{R_2^\text{\tiny NOMA}}}{SINR of weak user in a NOMA system}
\newacronym{RateS}{\ensuremath{R_1^\text{\tiny NOMA}}}{SINR of strong user in a NOMA system}
\newacronym{RateI}{\ensuremath{R_i^\text{\tiny NOMA}}}{SINR of $i^{th}$ user in a NOMA system}
\newacronym{RateNoma}{\ensuremath{R_i^\text{\tiny NOMA}}}{SINR of weak user in a NOMA system}
\newacronym{RateO}{\ensuremath{R_i^\text{\tiny OMA}}}{SINR of weak user in a NOMA system}
\newacronym{RateO1}{\ensuremath{\overline{R}_1^\text{\scriptsize }}}{SINR of weak user in a NOMA system}
\newacronym{RateO2}{\ensuremath{\overline{R}_2^\text{\scriptsize }}}{SINR of weak user in a NOMA system}
\newacronym{RateOi}{\ensuremath{\overline{R}_i^\text{\scriptsize }}}{SINR of weak user in a NOMA system}
\newacronym{sigma2}{\ensuremath{\sigma^2}}{Noise power}
\newacronym{I}{\ensuremath{I}}{Interference}
\newacronym{GammaCSI}{\ensuremath{\gamma_i^{\text{\tiny CSI}}}}{SINR of strong user in a OMA system}
\newacronym{Gamma1CSI}{\ensuremath{\gamma_1^\text{\tiny CSI}}}{SINR of weak user in a OMA system}
\newacronym{Gamma2CSI}{\ensuremath{\gamma_2^\text{\tiny CSI}}}{SINR of weak user in a OMA system}
\newacronym{GammaIO}{\ensuremath{\gamma_i^\text{\tiny OMA}}}{SINR of $i^{th}$ user  in a OMA system}
\newacronym{Ow}{\ensuremath{\mathcal{O}_2^\text{\tiny }}}{SINR of $i^{th}$ user  in a OMA system}
\newacronym{Os}{\ensuremath{\mathcal{O}_1^\text{\tiny }}}{SINR of $i^{th}$ user  in a OMA system}
\newacronym{Oi}{\ensuremath{\mathcal{O}_i^\text{\tiny }}}{SINR of $i^{th}$ user  in a OMA system}
\newacronym{thetak'}{\ensuremath{\hat{\theta}_n}}{Maximum possible phase noise}
\DeclareMathOperator{\sinc}{sinc}
\newcommand\copyrighttext{%
  \footnotesize \textcopyright This work has been submitted to the IEEE for possible publication. Copyright may be transferred without notice, after which this version may no longer be accessible.}
\newcommand\copyrightnotice{%
\begin{tikzpicture}[remember picture,overlay]
\node[anchor=south,yshift=10pt] at (current page.south) {\fbox{\parbox{\dimexpr\textwidth-\fboxsep-\fboxrule\relax}{\copyrighttext}}};
\end{tikzpicture}%
}
\begin{document}
\bstctlcite{IEEEexample:BSTcontrol}
\title{\huge Optimizing the Placement and Beamforming of RIS in Cellular Networks: A System-Level Modeling Perspective }
\author{\IEEEauthorblockN{Pavan Reddy M ., SaiDhiraj Amuru, and Kiran Kuchi \vspace{-0.6cm}}
\thanks{ Pavan Reddy M. is with WiSig Networks, Hyderabad, India. Kiran Kuchi and SaiDhiraj Amuru are with the Department of Electrical Engineering, Indian Institute of Technology Hyderabad, Telangana, India.  \newline {(e-mail: pavan@wisig.com,~asaidhiraj@ee.iith.ac.in,~kkuchi@ee.iith.ac.in).}}
}
\maketitle
\copyrightnotice
\begin{abstract}
In this letter, we present in detail the system-level modeling of reconfigurable intelligent surface (RIS)-assisted cellular systems by considering a 3-dimensional channel model between base station, RIS, and user. We prove that the optimal placement of RIS to achieve wider coverage is exactly opposite to the base station, under the constraint of single RIS in each sector. We propose a novel beamforming design for RIS-assisted cellular systems  and derive the achievable sum rate in the presence of ideal, discrete, and random phase shifters at RIS. Through extensive system-level evaluations, we then show that the proposed beamforming design achieves significant improvements as compared to the state-of-the-art algorithms.
\end{abstract}

\begin{IEEEkeywords}
Beamforming, phase shifters, reconfigurable intelligent surfaces (RIS), and spectral efficiency.
\end{IEEEkeywords}
\IEEEpeerreviewmaketitle 
\section{Introduction}
\gls{RIS} is  considered a promising technology for the next generation cellular communications to improve the achievable network capacity and cellular coverage~\cite{Chandra,Self1}. \gls{RIS} consists of a large number of passive antenna elements (or meta-surfaces) which can reflect the incident ray toward the desired direction. By controlling the impedance of the  meta-surfaces through a passive electronic circuit, an additional phase shift is introduced into the reflected signal, and thus, the signal is steered in the desired direction~\cite{Coverage,Self1,MISO,CosTheta}. This way, \gls{RIS} helps in achieving improved signal reception for the users and also providing coverage to the users who are affected by the signal blockages. 

In the existing literature, \gls{RIS} has been extensively analyzed and significant improvements are demonstrated for single cell scenario~\cite{RISNOMA,IRSNOMACom,IRSNOMATcom}.  In~\cite{RISNOMA}, the authors propose a joint beamforming algorithm to maximize the network capacity.
In~\cite{IRSNOMACom,IRSNOMATcom}, the authors propose power efficient and sum-rate maximizing algorithms for \gls{RIS}-assisted systems. In~\cite{Initial}, the authors present an initial access protocol for \gls{RIS} aided cellular systems. However, these works do not consider the multi-cell analysis while quantifying the achievable gains. Note that only in the multi-cell analysis, the impact of the inter-cell interference from the beamformed \gls{RIS} is captured, and thus, realistically achievable gains can be understood. Additionally, while performing such multi-cell analysis, a 3-dimension channel modeling between the \gls{BS}, \gls{RIS}, and user has to be considered to quantify the achievable gains. However, very few works in the literature have considered the 3-dimension channel model and carried out system-level evaluations~\cite{Random,SimRis1,SimRis2,SimRis3,36873}.

Further, the beamforming at \gls{RIS} has a significant impact on the achievable gains with \gls{RIS}-assisted systems~\cite{TSINGHUA}.   In~\cite{Random}, the authors consider deploying multiple \gls{RIS} and using random phase shifters at each \gls{RIS}. This is a low-complex way to enhance the network performance but only a few users whose channel coefficients are aligned with those assigned phase shifters will observe spectral enhancements. In~\cite{Chandra}, the authors consider random phase shift allocation in each time slot and assume that the users decode the pilot signals transmitted at the beginning of each slot.
Then, the users' feedback the channel quality reports to the \gls{BS}, and the users with the best instantaneous channel conditions will be scheduled for data transmission within the rest of the slot. With this approach, the effective channel observed by each user changes from slot to slot and \gls{BS} schedules the data transmissions according to the channel quality reports to maximize the achievable capacities. However, this procedure requires a large number of active users to realize the desired gains, and also, the feedback from the users within the same time slots is difficult to realize in practice. Hence, there is a need to consider all the aforementioned details and design a  practically feasible, low-complex, and yet optimal way of  beamforming at \gls{RIS}. 

\begin{table*}
\small
\begin{align}
\gls{H}&=\left[\gls{au}\left(\gls{thetaUAB},\gls{phiUAB}\right) \right]^T.\  \bold{\chi_{\scriptscriptstyle B,k}}. \ \gls{aB}\left(\gls{thetaBDk},\gls{phiBDk}\right). \exp\left(\dfrac{j2\pi(\gls{rbk}\gls{dbk})}{\lambda}\right).
\exp\left(\dfrac{j2\pi(\gls{rkb}\gls{dkb})}{\lambda}\right).
\exp\left(\dfrac{j2\pi\gls{rbk}}{\lambda}\bold{\nu_{B,k}} t\right)\label{eqn:HFD1}\\
\gls{G}&=\left[\gls{aR}\left(\gls{thetaRA},\gls{phiRA}\right) \right]^T.\  \bold{\chi_{\scriptscriptstyle B,R}}. \ \gls{aB}\left(\gls{thetaBDR},\gls{phiBDR}\right). \exp\left(\dfrac{j2\pi(\gls{rbr}\gls{dbr})}{\lambda}\right).
\exp\left(\dfrac{j2\pi(\gls{rrb}\gls{drb})}{\lambda}\right)\label{eqn:HFDH3}\\
\gls{F}&=\left[\gls{au}\left(\gls{thetaUAR},\gls{phiUAR}\right) \right]^T  \hspace{-0.2cm}. \ \bold{\chi_{\scriptscriptstyle R,k}}. \ \gls{aR}\left(\gls{thetaRD},\gls{phiRD}\right). \exp\left(\dfrac{j2\pi(\gls{rrk}\gls{drk})}{\lambda}\right).
\exp\left(\dfrac{j2\pi(\gls{rkr}\gls{dkr})}{\lambda}\right).
\exp\left(\dfrac{j2\pi\gls{rrk}}{\lambda}\bold{\nu_{R,k}} t\right) \label{eqn:H3}
\end{align}
\vspace{-0.6cm}
\end{table*}
Motivated by these facts, we present the following key contributions in this letter.
{
\begin{itemize}

\item We prove that optimal \gls{RIS} placement (under the constraint of single \gls{RIS} in each sector) to achieve a wider coverage is exactly opposite to the base station with boresight of \gls{RIS} facing the boresight of the base station.
\item We propose a novel beamforming design for \gls{RIS}-assisted cellular systems and  analyze the impact of the selection of phase shifters on the proposed design. We derive the achievable sum rates in the presence of ideal, discrete, and random phase shifters.
\item We present the system-level modeling of \gls{RIS}-assisted cellular systems by 
considering the 3-dimensional channel modeling between the \gls{BS}, \gls{RIS}, and 
user. We perform extensive system-level simulations and show that the proposed beamforming design significantly outperforms the state-of-the-art algorithms.  
\end{itemize}}

\section{System Model}
\label{sec:SysModel}
\subsection{Scenario Description}
We consider a cellular network with \gls{BS}, \gls{RIS}, and users as shown in Fig.~\ref{fig:SysModel}. We assume \gls{M}, \gls{N}, and \gls{U} as the number of antenna elements at the \gls{BS}, \gls{RIS}, and user, respectively, and formulate the equivalent channel $\gls{h}\in \mathbb{C}^{U \times M}$ between the \gls{BS} and user $k$ as follows.
\begin{align}
\gls{h}=\gls{H}+\gls{F}\gls{Theta}\gls{G},
\label{eqn:heq}
\end{align}
 where, $[\cdot]^H$ represents Hermitian operation, and $\gls{G} \in \mathbb{C}^{N \times M}$, $\bold{F}\in \mathbb{C}^{U \times N}$, and $\bold{H}\in \mathbb{C}^{U \times M}$ denote the channel coefficients between the \gls{BS} to \gls{RIS}, \gls{RIS} to the user $k$, and the direct link between \gls{BS} and user $k$, respectively. Further, $\gls{Theta}\in \mathbb{C}^{N \times N}$ denotes the phase shift matrix at the \gls{RIS} formulated as follows~\cite{RISNOMA,Coverage}
\begin{align}
\gls{Theta}=&\ \diag(e^{j\theta_{1}},e^{j\theta_{2}}, \ldots,e^{j\theta_{N}}), \forall \ \theta_{n} \in \gls{Fst},\nonumber\\
\gls{Fst}\triangleq&\ \Big\{\theta_{n}\Bigl\vert |e^{j\theta_{n}}|\leq 1 \Big\}, \forall\ 1\leq n\leq N.\nonumber
\end{align}
\begin{figure}
    \hspace{-0.3cm}
    \begin{subfigure}{0.2\textwidth}
    \includegraphics[scale=0.35]{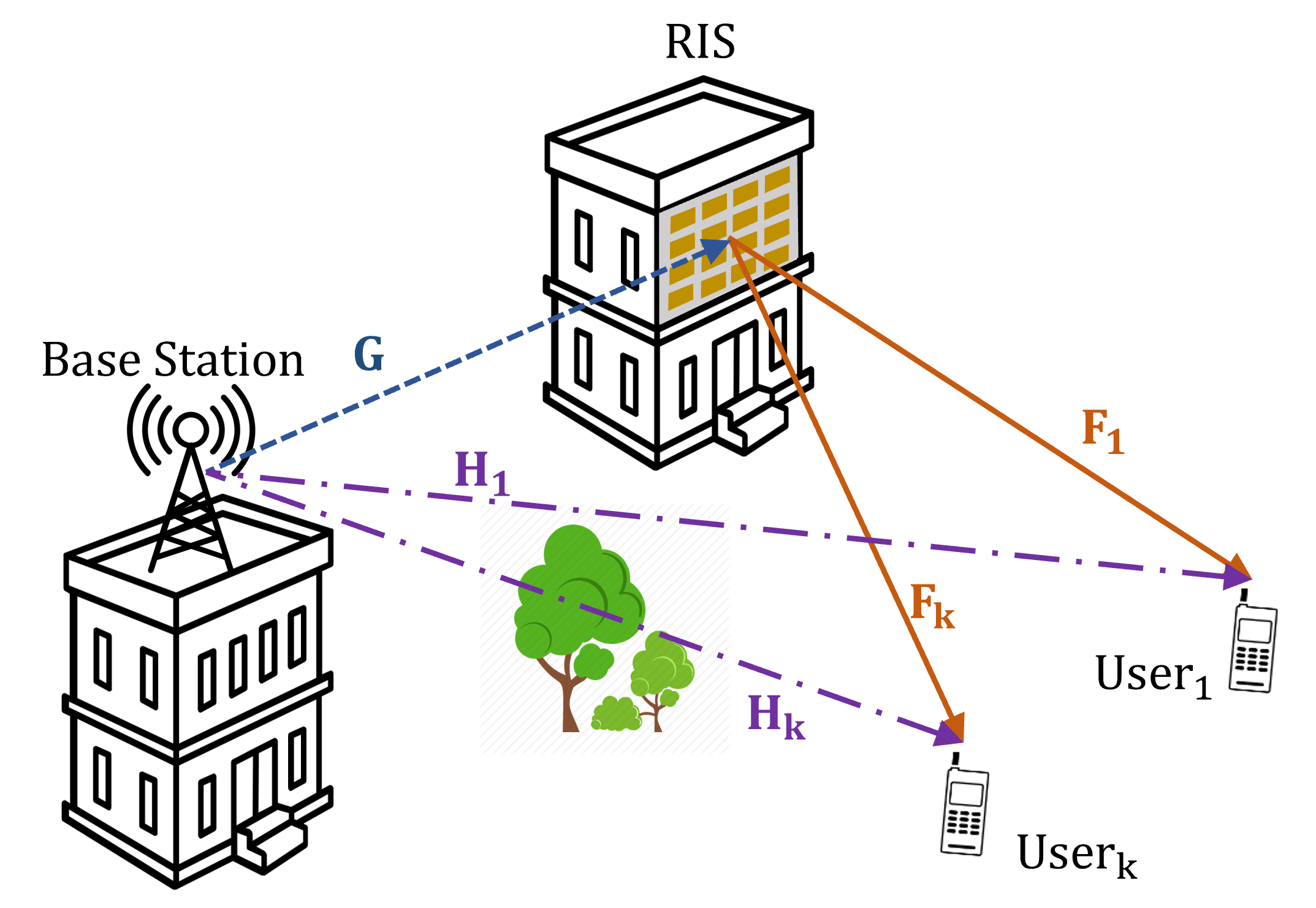}
    \caption{Scenario Depiction}
    \end{subfigure}
    \hspace{2cm}
    \begin{subfigure}{0.2\textwidth}
    \includegraphics[scale=0.35]{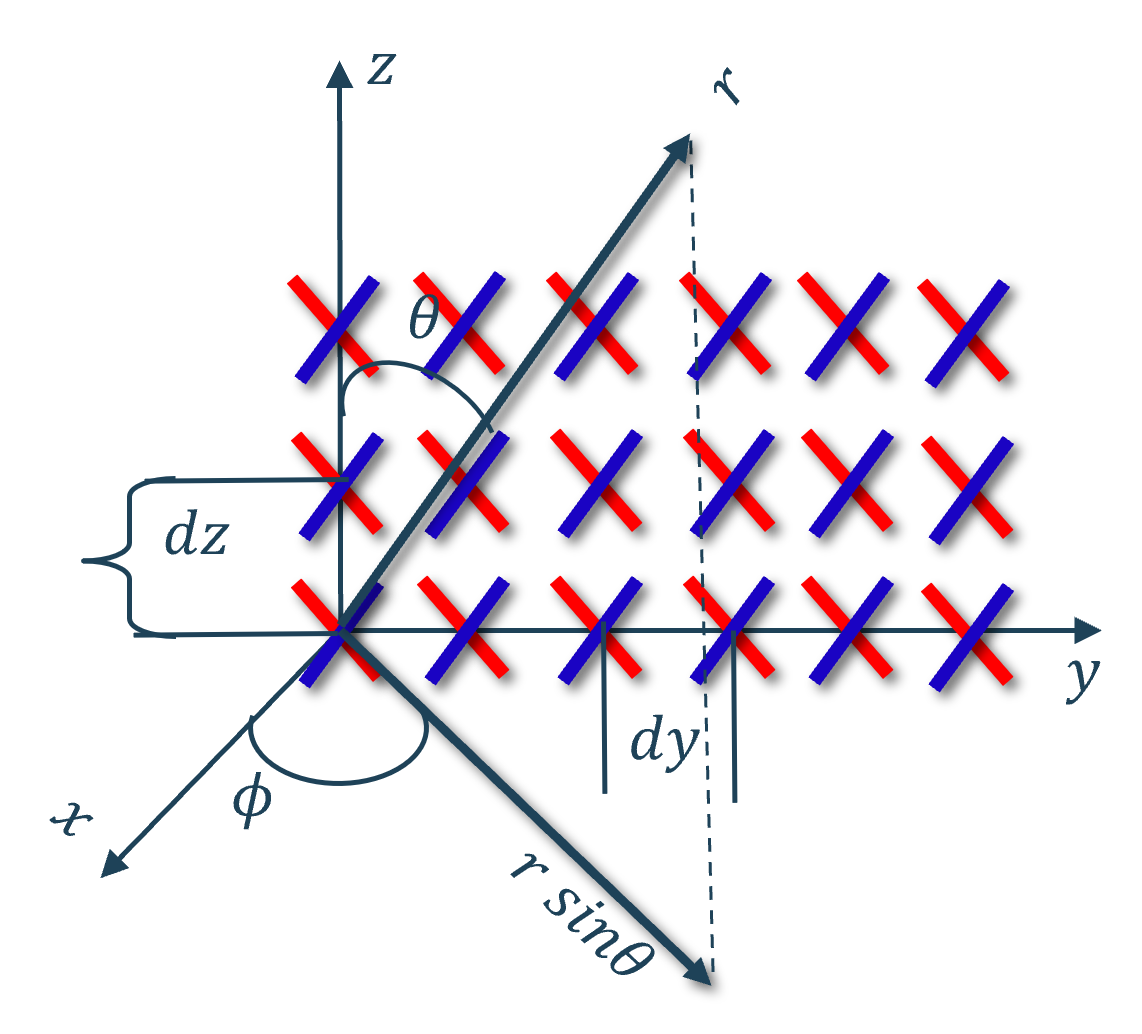}
    \caption{RIS Modelling}
    \end{subfigure}
    \caption{System Model.}
    \label{fig:SysModel}
\vspace{-.5cm}
\end{figure}
Note that ideal phase shifters are difficult to realize in practice. Thus, we formulate the discrete phase shifters as follows~\cite{TSINGHUA} 
\begin{align}
\mathcal{F}_D\triangleq \bigg\{\theta_{n} \Bigl \vert \theta_{n} \in \Big\{0, \tfrac{2\pi}{D}, \ldots, \tfrac{2\pi (D-1)}{D}\Big\} \bigg\}, \label{eqn:FD}
\end{align}
where $D$ denotes the number of discrete-levels at \gls{RIS}.
We consider the \gls{RIS}-assisted downlink transmission for the user $k$ as follows.
\begin{align}
\bold{x}_k&=\sqrt{\gls{Pmax}} s_k,\nonumber
\end{align}
where $s_k$ is the transmitted symbol corresponding to the user $k$ with unit power i.e., $\mathbb{E}\{s_k s_k^H\}=1$ and \gls{Pmax} is the maximum power the \gls{BS} can transmit. 
Thus, we formulate the received signal at the user $k$ as follows.
\begin{align}
\bold{y}_k=&\gls{h} \bold{x}_k+\bold{z}_k,\nonumber\\
=& \Big(\gls{H}+ \bold{F}_k^H\bold{\Theta}\bold{G}\Big)\bold{x}_k+\bold{z}_k.\label{eqn:hsmpl}\end{align}\\where  $\bold{z}_k$ represents the additive white Gaussian noise at the user $k$ with zero mean and co-variance $\bold{\Xi}_k=\sigma^2\bold{I}$. 

\subsection{3-dimensional Channel Model}
Typically, the physical size of the \gls{RIS} structure is comparatively much smaller than the actual distance of the \gls{RIS} from the \gls{BS} and user. Hence, we assume the \textit{far-field region} while formulating the channel model~\cite{Initial}. Further, considering the various existing evaluations in the literature~\cite{SimRis1, SimRis2,SimRis3}, we adapt the \gls{3GPP} based 3-dimension channel between the \gls{BS}, \gls{RIS}, and user~\cite{38901,SimRis1}. We define the clusters, sub-clusters, rays,  angle of arrivals, angle of departures, path loss, shadow-fading, and small-scale fading as per the channel modeling formulated in \cite{38901}. The formulation of channel coefficients between the \gls{BS}, \gls{RIS}, and user (excluding the large scale fading parameters such as path loss and shadow fading) is summarized in \eqref{eqn:HFD1}-\eqref{eqn:H3}, where subscripts $\scriptstyle \bold{B}$, $\scriptstyle \bold{R} $, and $\scriptstyle \bold{k}$ represent base stations, \gls{RIS}, and user, respectively. 
$\bold{\phi_{AOA}}, \ \bold{\theta_{AOA}},\   \bold{\phi_{AOD}},$ and $\bold{\theta_{AOD}}$ represent the angle of arrivals in azimuth and elevation, and angle of departures in azimuth and elevation, respectively. Based on the cellular environment (such as urban, rural, etc.), these angles are assigned considering various sub-rays corresponding to various clusters between the transmitter and receiver as per~\cite{36873,38901}.
\gls{au}, \gls{aR}, and \gls{aB} represent the antenna gains at the user, \gls{RIS}, and \gls{BS}, respectively. We consider the sectoral antenna pattern as per~\cite{38901,36873} at the \gls{BS} and omni directional antenna at the user. We model the transmissions from \gls{RIS} as just reflections (passive) i.e., with zero additional antenna gain ($\gls{aR}(\cdot)\leq1$). $\bold{\chi}$ captures the cross polarisation powers, $\bold{d}$ represents the location vector, and $\bold{r}$ represents the spherical unit vector for corresponding antenna elements, respectively. For the planar array shown in Fig.~\ref{fig:SysModel}, the spherical unit vector is formulated as shown below~\cite{38901}.
\begin{align}
\bold{r}=\begin{bmatrix} \sin \boldsymbol{\theta} \cos \boldsymbol{\phi} \\ \sin \boldsymbol{\theta} \sin \boldsymbol{\phi}\\ \cos \boldsymbol{\theta} \end{bmatrix}.\label{eqn:rm}
\end{align}
$\nu$ represents the speed of the receiver and the corresponding exponential in \eqref{eqn:HFD1} and \eqref{eqn:H3} captures the Doppler effect. Note that no Doppler term is considered in \eqref{eqn:HFDH3}, as we consider \gls{RIS} is stationary. Further, we also consider the path loss and shadow fading for each cellular environment as per~\cite{36873}.

\begin{figure}
\centering
\includegraphics[scale=0.285]{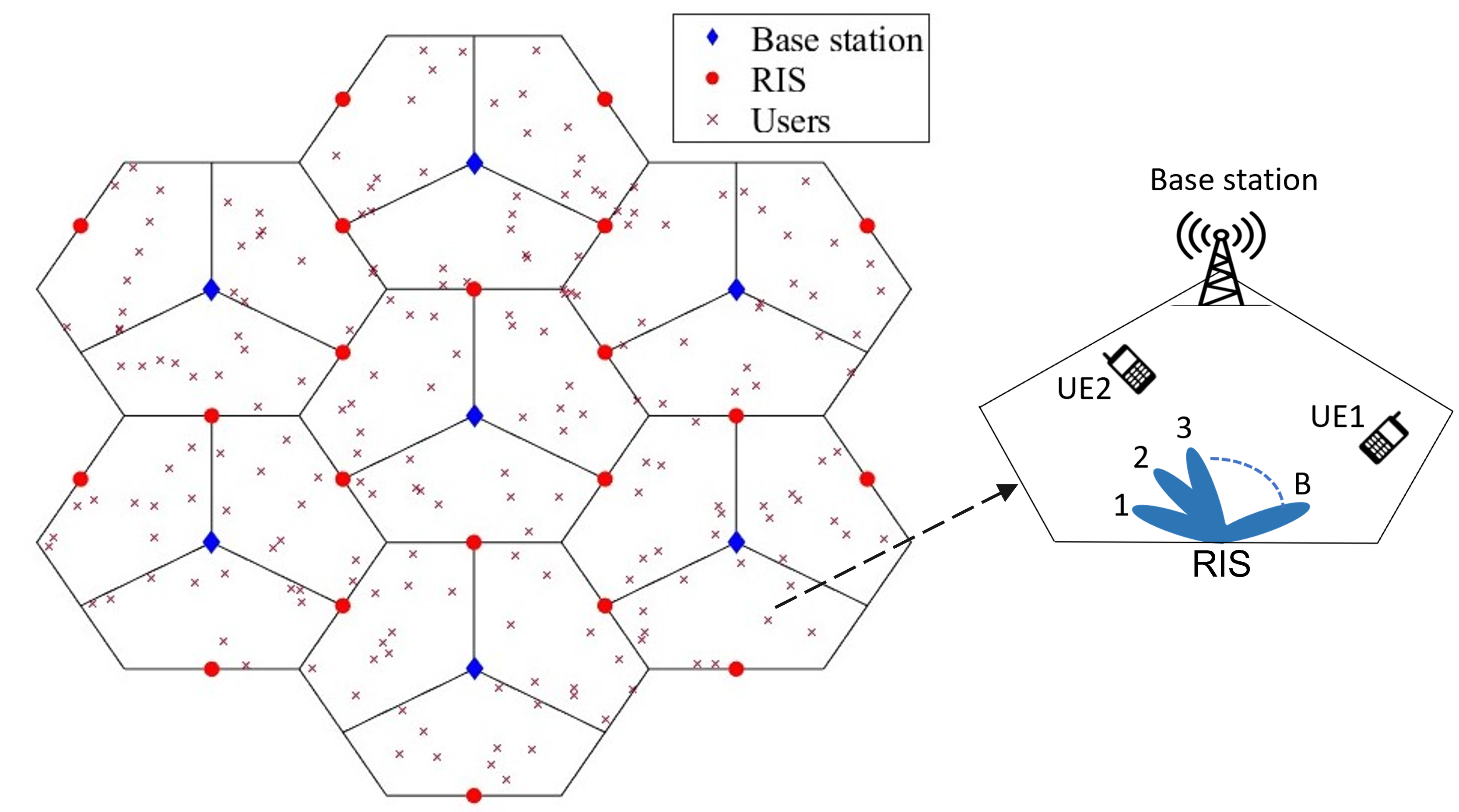}
\caption{Proposed \gls{RIS} organization and beamforming}
\label{fig:RISorg}
\vspace{-0.5cm}
\end{figure}

\section{Proposed \gls{RIS}  Organization and Beamforming}
\label{sec:Proposed}
\subsection{\gls{RIS} organization}
For the \gls{RIS}-assisted cellular networks, we propose the \gls{RIS} organization as shown in Fig.~\ref{fig:RISorg}.  The boresights of the sectors are considered to be in the directions of $[30^\circ, 150^\circ, 270^\circ]$ and the \gls{RIS} is placed at the edge of each sector with the boresight of \gls{RIS} elements facing the boresight of the \gls{BS}. This organization of the \gls{BS} orientations\ and positioning of the \gls{RIS} helps in ensuring \gls{RIS} access to all the users in the sector. When compared to the scenario with boresight of the sectors in the direction of $[0^\circ, 120^\circ, 240^\circ ]$ and \gls{RIS} placed at cell-edge with its elements facing the boresight of the \gls{BS},   the proposed organization has two key benefits as follows. With the proposed organization, the \gls{BS} to \gls{RIS} distance is $\frac{ISD}{2}$ instead of $\frac{ISD}{\sqrt{3}}$, where, $ISD$ indicates the inter-site distance. Note that the achievable gains with \gls{RIS}-assisted systems are heavily dependent on the path loss observed in the \gls{BS}-\gls{RIS}-user link. Thus, with the proposed organization, the path loss observed will be minimal, and hence, enhanced signal reception can be achieved by the user. Further, with the proposed organization, the interference caused by the beamforming of the \gls{RIS} to the other sectors is also comparatively minimum, as the nearest neighbor sectors are not in the boresight direction of the \gls{RIS}.

\begin{proposition}
\label{lemma:LemmaP}
The optimal location for the \gls{RIS} to achieve a wider coverage is at the opposite end of the sector, with the boresight of the \gls{RIS} facing the boresight of the \gls{BS} antenna.
\end{proposition}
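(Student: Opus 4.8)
The plan is to turn the phrase ``wider coverage'' into a concrete optimization over the RIS location $\mathbf{d}_{RIS}$ and its boresight orientation, and then argue geometrically that the opposite-edge, boresight-to-boresight configuration is the unique maximizer. Using \eqref{eqn:hsmpl} together with the cascaded channel \eqref{eqn:heq} and the far-field model \eqref{eqn:HFD1}--\eqref{eqn:H3}, after applying a phase-aligning reflection matrix $\boldsymbol{\Theta}$ the RIS-assisted received power at user $k$ scales as $P_t\, N^2\, G_B(\angle_{B\to R})\, G_R(\angle_{R\to B})\, G_R(\angle_{R\to k})\, G_k(\cdot)\, L(d_{B,R})\, L(d_{R,k})$, where $L(d)\propto d^{-\alpha}$ is the large-scale gain with path-loss exponent $\alpha$ and the $G(\cdot)$ are the sectoral/element patterns of Section~\ref{sec:SysModel}. ``Wider coverage'' then means maximizing the measure of the set of user locations in the sector whose combined (direct plus RIS) SNR clears the decoding threshold; the binding locations are the cell-edge users whose direct term $\|\mathbf{H}_k^H\|^2$ is weak or blocked, so the problem reduces to choosing $(\mathbf{d}_{RIS},\text{boresight})$ to maximize the worst-case RIS-assisted SNR over that sub-population.

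I would then split the argument into three geometric claims. First, a reflective RIS only couples rays lying in the half-space in front of its surface, so feasibility --- ``the BS illuminates the RIS and every user in the sector can be served by it'' --- forces the RIS normal to face a half-space containing both the BS and the whole $120^\circ$ wedge, which is possible only when the RIS sits on the far boundary of the sector with its boresight pointing back toward the BS; any interior placement leaves a positive-measure subset of users behind the panel, and any outward-facing orientation loses the BS, and either strictly shrinks coverage. Second, among the far-boundary inward-facing placements, the BS sectoral gain $G_B(\angle_{B\to R})$ is maximized exactly when the RIS lies on the angular bisector of the sector, i.e.\ on the BS boresight, while the reciprocal term $G_R(\angle_{R\to B})$ and the element factor $|a_R(\cdot)|$ are maximized when the RIS boresight points straight at the BS --- this pins down ``boresight facing boresight.'' Third, pushing the RIS inward along the bisector would reduce $d_{B,R}$ but increase $d_{R,k}$ for precisely the cell-edge users identified as binding, and since coverage is governed by that worst case while $d_{B,R}d_{R,k}$ for those users is non-increasing as the RIS moves to the far edge, the extreme (opposite-end) position is optimal; the same choice also minimizes the inter-sector leakage discussed around Fig.~\ref{fig:RISorg}, reinforcing the conclusion.

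The main obstacle I anticipate is making the third claim airtight, because $d_{B,R}$ (which we want small) and $\max_k d_{R,k}$ over the served set (which we also want small) pull the RIS in opposite directions along the bisector; the resolution is that the coverage functional is not symmetric in users --- it is dominated by the worst-served cell-edge users, for whom the direct link is negligible --- so one should restrict attention to that sub-population, for which both the field-of-view constraint and the distance product favor the far end. A secondary subtlety is justifying that the phase-aligned $\boldsymbol{\Theta}$ used to define the achievable RIS-assisted SNR is the relevant one for a coverage (worst-user) criterion rather than a sum-rate criterion; this is fine because the placement argument only uses the path-loss and antenna-gain envelope of that SNR, which simultaneously upper-bounds every admissible phase configuration, so optimizing placement against the envelope is without loss.
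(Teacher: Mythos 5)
Your step 1 (the field-of-view/half-space restriction) and the way you pin the orientation are in the spirit of the paper's argument, and your step 2 takes a legitimately different route from the paper for the angular position: the paper fixes the point $D$ on the arc not via the \gls{BS} sectoral gain but by noting that all arc points are equidistant from the \gls{BS} and then comparing the \emph{worst-case} \gls{RIS}-to-user distance (at $D$ the farthest user is at distance $r$, whereas at any off-bisector point $E$ or $F$ the farthest corner satisfies $EB>r$, $FA>r$). Your gain-based argument alone does not establish dominance for a worst-user coverage criterion unless you also check that the off-bisector worst-case \gls{RIS}--user distance does not improve; it in fact worsens, so the two factors happen to agree, but that check (which is precisely the paper's central step) should be stated.

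The genuine gap is in your step 3. The claim that ``$d_{B,R}\,d_{R,k}$ for the binding cell-edge users is non-increasing as the \gls{RIS} moves to the far edge'' is false. Put the \gls{BS} at the origin, the bisector along the $x$-axis, the far corner at $A=(r/2,\sqrt{3}r/2)$, and the \gls{RIS} at $(t,0)$, $0<t\le r$. Then $d_{B,R}\,d_{R,A}=t\sqrt{t^{2}-rt+r^{2}}$, and the derivative of its square, $t\left(4t^{2}-3rt+2r^{2}\right)$, is strictly positive, so the distance product is \emph{increasing} in $t$: on pure path-loss-product grounds the cascaded link actually favors pushing the \gls{RIS} toward the \gls{BS}, not toward the edge. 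So the far-edge preference cannot be derived from the envelope SNR along the bisector; it comes only from the one-sided-reflection (field-of-view) constraint that an interior panel facing the \gls{BS} leaves part of the sector behind it (and a panel facing the sector loses, or only grazes, the \gls{BS}) --- exactly the argument the paper uses to confine the \gls{RIS} to the sector circumference, supplemented by the $\cos\theta$ reflection loss and the weak sectoral gain to rule out the radial edges. Your claim 1 already supplies this, so the fix is to delete the distance-product justification in claim 3 and let claim 1 carry the exclusion of interior placements (and to handle the radial-edge/grazing case explicitly rather than by ``loses the BS''); as written, claim 3 asserts a monotonicity that is the opposite of the truth.
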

\begin{figure}
\centering 
\includegraphics[scale=0.6]{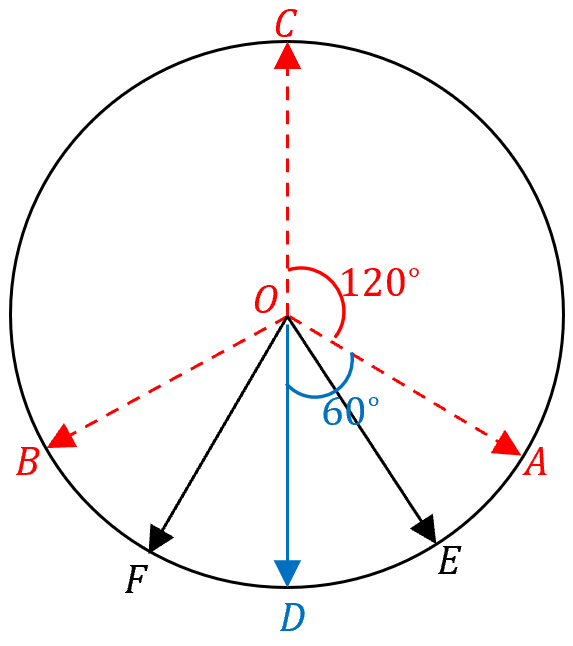}
\caption{Optimal \gls{RIS} location analysis for proof of Lemma~\ref{lemma:LemmaP}. }
\label{fig:Theorem}
\vspace{-0.5cm}
\end{figure}
\begin{proof}
We consider the circular approximation for a cell site and assume the 3-sector division as shown in Fig.~\ref{fig:Theorem}~\cite{Andrea}, where $OA, OB,$ and $OC$ are the sector divisions. We prove that $D$ is the optimal location of the \gls{RIS} for the sector $OAB$, and the same can be extended to the other sectors. Note that the reflections from the \gls{RIS} are observed on only one side of the \gls{RIS}, and hence, to ensure \gls{RIS} access to the entire sector, \gls{RIS} has to be placed only on the circumference of the sector $OAB$. Further, the reflection amplitude of the \gls{RIS} is typically approximated as $\cos(\theta)$, where $\theta$ is the angle of incidence on \gls{RIS}~\cite{CosTheta}. Thus, when \gls{RIS} is placed along the lines $OA$ and $OB$ but facing the sector, the signal from the base station is received at an angle of incidence of $90^\circ$ resulting in minimal gains. Also, note that typical \gls{BS} cellular antenna patterns have the least gains in the direction of sector-edges~\cite{36873}. Thus, the optimal location of the \gls{RIS} can only be on the arc $AB$. 

As shown in Fig.~\ref{fig:Theorem}, consider 3 points on the arc $AB$ which are at $0^\circ, >0^\circ,$ and $<0^\circ$ from the \gls{BS} denoted as $D, E,$ and $F$, respectively. Note that all three points are at the same distance from \gls{BS} (or $O$ in Fig.~\ref{fig:Theorem}). Thus, the distance-dependent path loss observed by the \gls{RIS} at these locations would be similar.
Further, when we place \gls{RIS} at $D$, the maximum distance of any user in the sector from $D$ does not exceed the radius of the sector $r$. For example, $A$ is the farthest location from $D$.  Since, $OA=OD=r$ and $\angle AOD = 60^\circ$, $AD=r$.
However, this is not the case when the \gls{RIS} is placed at either $E$ or $F$. The \gls{RIS} will be closer to some users in the sector, but the farthest locations from $E$ and $F$ are 
$EB>r$ and $FA>r$. This increase in the distance from \gls{RIS} to the users will result in decreased signal reception, and thus provides less coverage to these users when compared with the case of \gls{RIS} at $D$. Hence, the optimal location of \gls{RIS} for wider coverage is $D$. This ends the proof of Lemma~\ref{lemma:LemmaP}. 
\end{proof}

\vspace{-0.7cm}
\subsection{Beamforming design}
To perform optimal beamforming at \gls{RIS} for each user, the \gls{BS} has to know the complete channel state information between \gls{RIS} and each user. However, this is not always achievable in practice. Hence, we propose a practically feasible beamforming design as follows. We configure \gls{B} number of predefined digital beams at the \gls{RIS} such that they cover the entire sector, as shown in Fig.~\ref{fig:RISorg}. The number of beams \gls{B} can be decided by the network operator based on the number of antenna elements at the \gls{RIS}. Given \gls{B} beams, the \gls{BS} can tune the phase shifters at \gls{RIS} and activate one of the configured beams over various time slots. At the receiver, the user calculates the effective received signal strength from the configured beam and also the direct link from the \gls{BS}. The user then reports back the best beam index to the \gls{BS} in the uplink. For example, the best beam for users 1 and 2 in Fig.~\ref{fig:RISorg} are beam-\gls{B} and beam-3, respectively. This way, the \gls{BS} can configure the desired beams at the \gls{RIS} while scheduling the data for each user.  Note that a similar approach can also be adapted for the initial user attach. While transmitting the downlink synchronization signals, the \gls{BS} can tune the phase shifters at \gls{RIS} and transmit different beams over various time slots such that the user can lock to the beam from which it receives maximum signal strength.
\begin{figure*}[t!]
\centering
\includegraphics[scale=0.41]{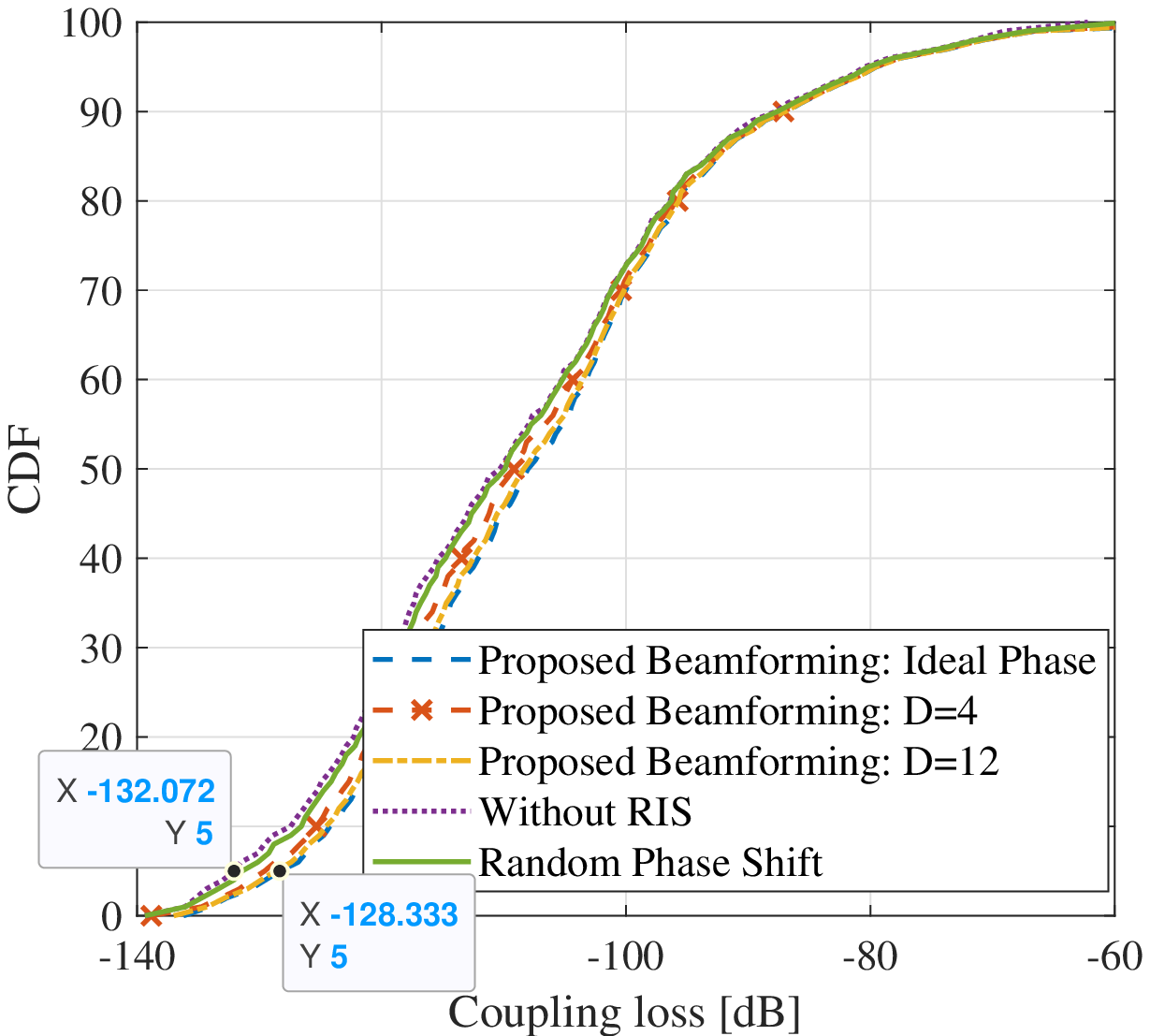}
\includegraphics[scale=0.41]{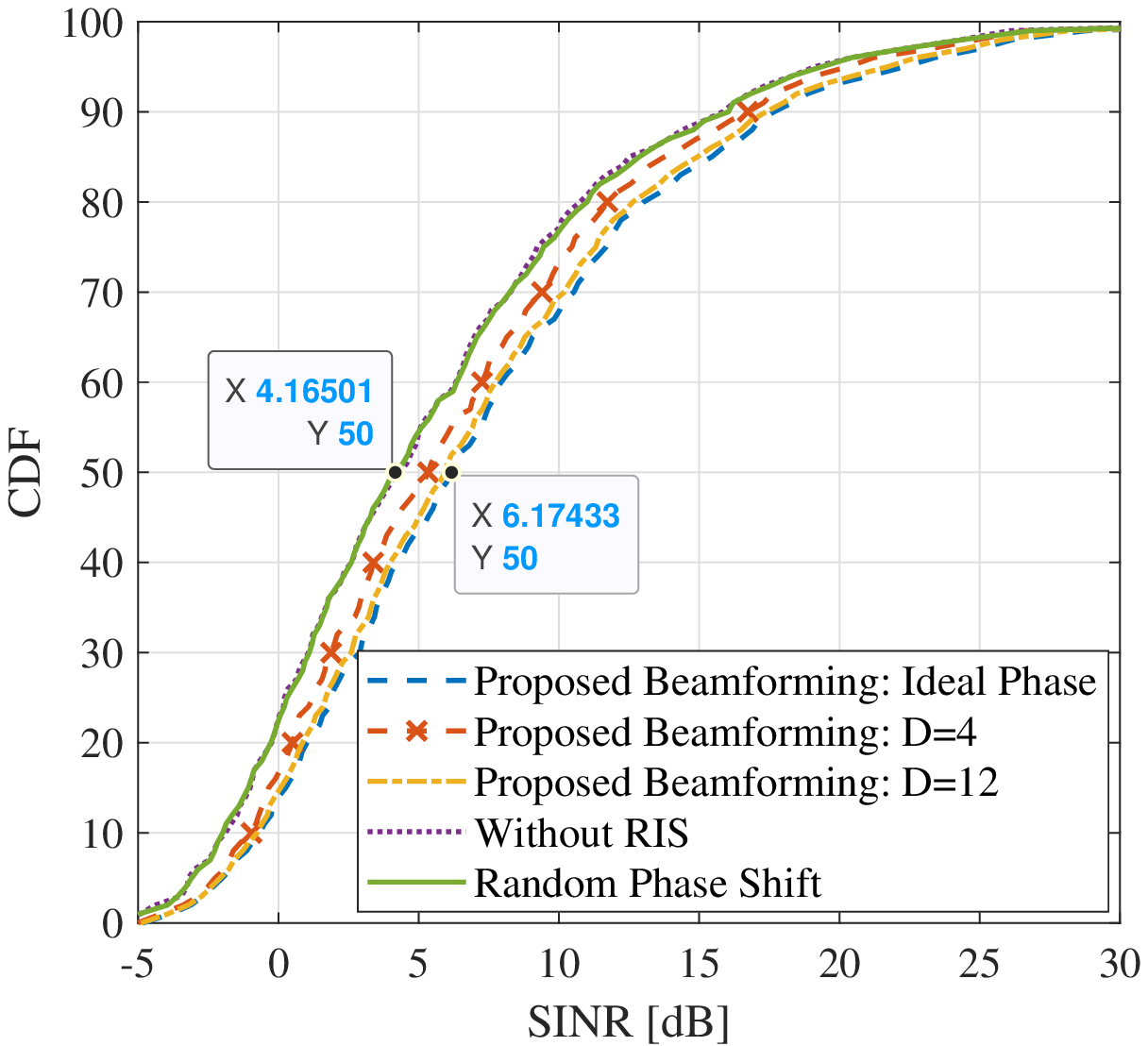}
\includegraphics[scale=0.41]{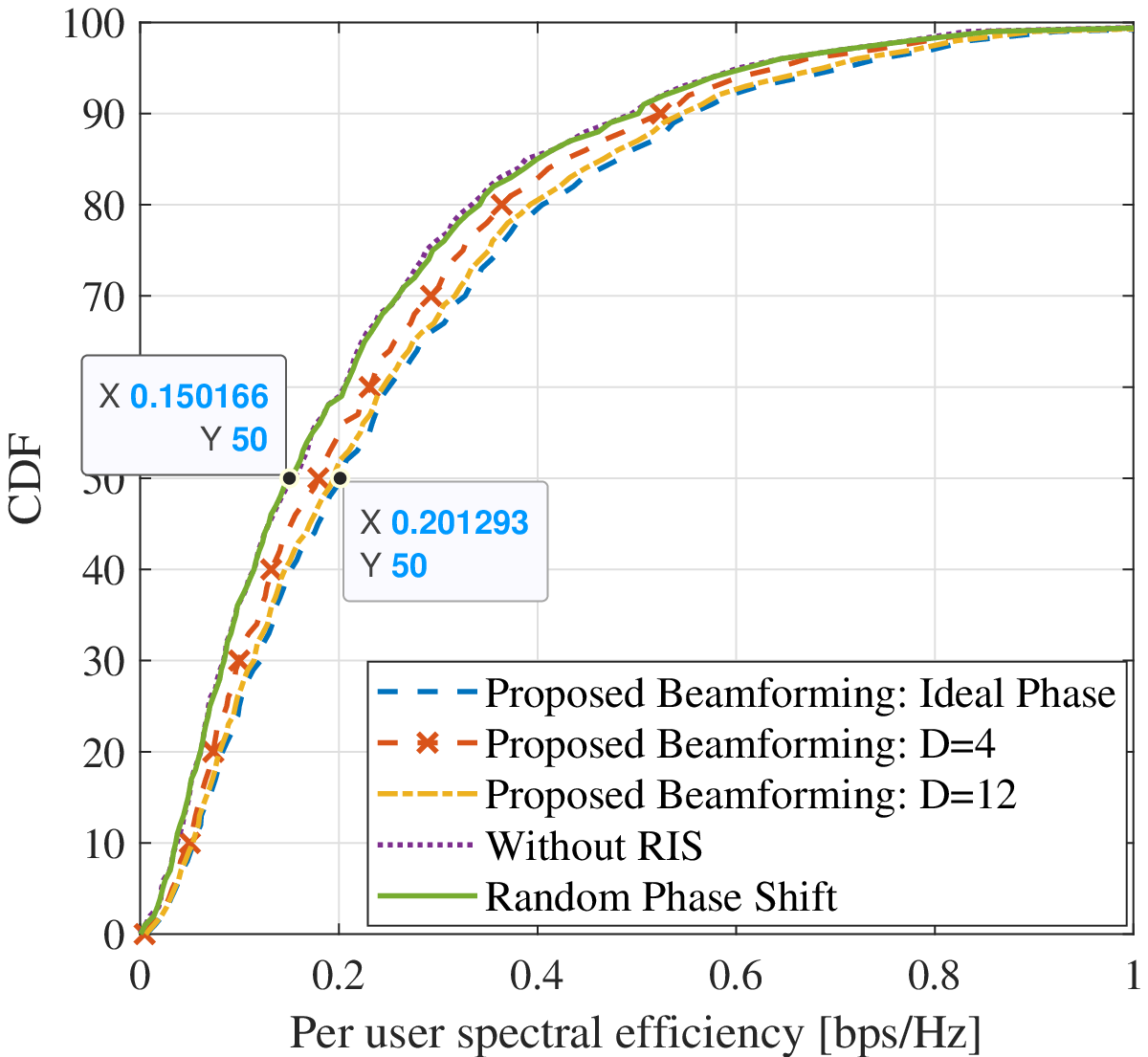}
\caption{System-level evaluation of achievable coupling loss, SINR, and spectral efficiency with various algorithms}
\label{fig:SINR}
\vspace{-0.4cm}
\end{figure*}
\begin{lemma} 
The maximum achievable data rate for user $k$ in the presence of ideal phase shifters at \gls{RIS} is given by
\begin{align}
\log_2\left(1+\gls{Pt}\dfrac{\Big||\gls{H}|+|\gls{F}||\gls{G}|\Big|^2}{\bold{\Xi}_k}\right) \label{eqn:l1}
\end{align}
\end{lemma}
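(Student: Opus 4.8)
The plan is to start from the received-signal expression~\eqref{eqn:hsmpl}, write the effective channel as $\gls{h}=\gls{H}+\gls{F}\gls{Theta}\gls{G}$, and argue that the achievable rate $\log_2\bigl(1+\gls{Pt}\,|\gls{h}|^2/\bold{\Xi}_k\bigr)$ is maximized over the feasible set of phase-shift matrices $\gls{Theta}$ when the reflected path is coherently aligned with the direct path. First I would expand the scalar (or effective) channel magnitude by writing $\gls{F}\gls{Theta}\gls{G}=\sum_{n=1}^{N} [\gls{F}]_n\, e^{j\theta_n}\,[\gls{G}]_n$, so that $|\gls{h}|$ is a sum of $N+1$ complex terms — the direct term from $\gls{H}$ and the $N$ reflected terms, each carrying a free phase $\theta_n$. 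By the triangle inequality, $|\gls{h}| \le |\gls{H}| + \sum_n |[\gls{F}]_n|\,|[\gls{G}]_n|$, and since $|e^{j\theta_n}|=1$ is the only constraint imposed by $\gls{Fst}$, each $\theta_n$ can be chosen freely to rotate the $n$-th reflected term into phase with the direct term.

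The key steps, in order, are: (i) fix the phase reference to that of the direct-link contribution $\gls{H}$; (ii) for each $n$, choose $\theta_n$ so that $[\gls{F}]_n e^{j\theta_n}[\gls{G}]_n$ is a nonnegative real multiple of that reference, i.e. $\theta_n = \arg([\gls{H}]) - \arg([\gls{F}]_n[\gls{G}]_n)$; (iii) observe that with this choice the magnitudes add constructively, giving $|\gls{h}| = |\gls{H}| + \sum_n |[\gls{F}]_n||[\gls{G}]_n| = |\gls{H}| + |\gls{F}||\gls{G}|$ in the notation of the statement; (iv) note this matches the triangle-inequality upper bound, hence is optimal; (v) substitute into $\log_2(1+\gls{Pt}|\gls{h}|^2/\bold{\Xi}_k)$ and, since $\log_2(1+\cdot)$ is monotonically increasing, conclude that~\eqref{eqn:l1} is the maximum achievable rate. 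Because ideal (continuous) phase shifters are assumed, the feasible set $\gls{Fst}$ places no obstruction on attaining the alignment, so the bound is tight.

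The main obstacle is being careful about what $|\gls{H}|$, $|\gls{F}|$, and $|\gls{G}|$ denote when $\bold{H}$, $\bold{F}$, $\bold{G}$ are matrices rather than scalars: the clean factorization $|\gls{F}\gls{Theta}\gls{G}| = |\gls{F}||\gls{G}|$ only holds entrywise (or after reducing to the effective single-stream scalar channel via the relevant combining/precoding vectors), so I would make explicit the reduction to a per-entry magnitude sum and interpret $|\gls{F}||\gls{G}|$ as $\sum_n |[\gls{F}]_n||[\gls{G}]_n|$ accordingly. A secondary point to address is that simultaneous alignment of \emph{all} $N$ reflected terms with the direct term is indeed feasible — this is immediate here since the $\theta_n$ are independent, but it is worth stating so the reader sees that no trade-off among users or entries arises in this idealized single-user, ideal-phase setting. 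Once these interpretive points are pinned down, the argument is just the triangle inequality plus monotonicity of the logarithm.
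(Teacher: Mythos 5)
Your proposal is correct and follows essentially the same route as the paper: choose the RIS phases so each cascaded term aligns with the phase of the direct link (the paper writes this as $\gls{ThetaO}=\exp\left(j\left(\angle\gls{H}-\angle(\gls{F}+\gls{G})\right)\right)$, citing an external reference rather than deriving it), so the magnitudes add coherently and the rate follows from monotonicity of $\log_2(1+\cdot)$. The only difference is that you make explicit, via the triangle inequality and the per-element sum $\sum_n |[\gls{F}]_n||[\gls{G}]_n|$, the optimality and the interpretation of $|\gls{F}||\gls{G}|$ that the paper leaves implicit by citation.
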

\begin{proof}
The optimal values of phase shifters that maximize the achievable data rate are given by
\begin{align} 
\gls{ThetaO}=\arg \max_{\gls{Theta}} ||\gls{h}||^2 =\arg \max_{\gls{Theta}} ||\gls{H}+\gls{F}\gls{Theta}\gls{G} ||^2\nonumber\end{align}
\begin{align}
\hspace{-1.5cm} \text{From \cite{Chandra}},
 \gls{ThetaO}= \exp\left(j \left(\angle \gls{H}-\angle(\gls{F}+\gls{G})\right)\right) \label{eqn:ThetaOpt}
\end{align}
Using \eqref{eqn:ThetaOpt}, we formulate the achievable \gls{SNR} as
{\small
\begin{align}
\gls{gamma}
=\gls{Pt}\dfrac{\Big|\Big| |\gls{H}|\angle \gls{H}+|\gls{F}| |\gls{G}|\angle \gls{H}\Big|\Big|}{\bold{\Xi}_k}
=\gls{Pt}\dfrac{\Big||\gls{H}|+|\gls{F}||\gls{G}|\Big|^2}{\bold{\Xi}_k}\nonumber
\end{align}
}
Thus, the maximum achievable data rate at user $k$ in the presence of ideal phase shifters at \gls{RIS} is as shown in~\eqref{eqn:l1}.
\end{proof}
\begin{proposition}
As $N\rightarrow\infty$, the maximum achievable data rate at user $k$ with discrete phase shifters at \gls{RIS} is given by
\begin{align} \log_2\left(1+ \gls{Pt}\dfrac{\Big||\gls{H}|+\mathbb{E}[r_i]\sinc(\frac{\pi}{D})\Big|^2}{\bold{\Xi}_k}\right)\end{align}
where $D$ represents the number of discrete-levels of the phase shifters at \gls{RIS} and $r_i$ represents the diagonal elements of $\diag(\gls{F}\gls{ThetaO}\gls{G})$.
\label{lemma:lem2}
\end{proposition}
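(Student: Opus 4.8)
The plan is to start from the optimal ideal phase shifter result \eqref{eqn:ThetaOpt} established in the previous Remark, and to show that quantizing each phase to the nearest of the $D$ discrete levels in $\mathcal{F}_D$ introduces a per-element random phase error $\hat{\theta}_n$ that, in the regime of interest, is well approximated as uniformly distributed on $[-\pi/D,\pi/D]$. Concretely, with ideal phases the effective cascaded channel contributes $\sum_{i=1}^{N} r_i$ where $r_i$ are the diagonal elements of $\diag(\gls{F}\gls{ThetaO}\gls{G})$ (all aligned to $\angle\gls{H}$ by construction), whereas with discrete phases the corresponding contribution becomes $\sum_{i=1}^{N} r_i e^{j\hat{\theta}_n}$. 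The first step is therefore to write the discrete-phase SNR in exactly the same form as in the ideal-phase proof, but with each aligned term $r_i$ replaced by $r_i e^{j\hat{\theta}_n}$, and to isolate the magnitude of the sum $\big|\gls{H}\big| + \big|\sum_i r_i e^{j\hat{\theta}_n}\big|$.

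The second step is the averaging argument. Since the quantization errors $\hat{\theta}_n$ across the $N$ elements are (modeled as) i.i.d.\ uniform on $[-\pi/D,\pi/D]$ and independent of the magnitudes $r_i$, I would invoke a law-of-large-numbers / concentration argument as $N\to\infty$: the normalized sum $\frac{1}{N}\sum_{i=1}^{N} r_i e^{j\hat{\theta}_n}$ converges to $\mathbb{E}[r_i]\,\mathbb{E}[e^{j\hat{\theta}_n}]$. The expectation of the unit phasor over a symmetric uniform distribution on $[-\pi/D,\pi/D]$ is the standard computation
\begin{align}
\mathbb{E}[e^{j\hat{\theta}_n}] = \frac{D}{2\pi}\int_{-\pi/D}^{\pi/D} e^{j\theta}\, d\theta = \frac{D}{2\pi}\cdot\frac{2\sin(\pi/D)}{1} \cdot \frac{1}{1} = \sinc\!\left(\frac{\pi}{D}\right),
\end{align}
using the convention $\sinc(x)=\sin(x)/x$. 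Hence the aggregate cascaded term scales like $N\,\mathbb{E}[r_i]\sinc(\pi/D)$, matching the claimed expression once the common scaling with $N$ is absorbed into the notation (as was done implicitly in the ideal-phase Remark). Plugging this back into the SNR ratio $\gls{Pt}\,\big||\gls{H}| + \mathbb{E}[r_i]\sinc(\frac{\pi}{D})\big|^2 / \bold{\Xi}_k$ and taking $\log_2(1+\cdot)$ yields the stated rate.

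The main obstacle — and the step that deserves the most care — is justifying that the discrete phase errors are (asymptotically) uniform on $[-\pi/D,\pi/D]$ and mutually independent, and that they are independent of the $r_i$. This requires that the ideal phases $\angle\gls{H}-\angle(\gls{F}+\gls{G})$, evaluated modulo $2\pi/D$, are equidistributed over the quantization cell; for a rich scattering / continuous-angle channel model this is reasonable, but it is an assumption that should be stated explicitly rather than proven from first principles. A secondary technical point is the interchange of the magnitude operation with the limit: because $|\gls{H}|$ and the cascaded sum are aligned in phase in the ideal case, the triangle inequality is tight, and after quantization the residual phase spread only reduces the coherent sum by the factor $\sinc(\pi/D)$ while the cross term with $\angle\gls{H}$ still contributes its real part; I would note that the imaginary/orthogonal component of $\sum_i r_i e^{j\hat{\theta}_n}$ is zero in expectation and vanishes relative to $N$ by the same concentration argument, so it does not affect the leading-order magnitude. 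With these two modeling facts in hand, the rest is the routine substitution and the elementary integral above.
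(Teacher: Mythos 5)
Your proposal is correct and follows essentially the same route as the paper: decompose the discrete configuration as the ideal phases plus a per-element quantization error uniform on $[-\pi/D,\pi/D]$, apply the law of large numbers with the independence assumption to reduce the cascaded term to $N\,\mathbb{E}[r_i]\sinc(\pi/D)$, use the phase alignment with $\angle \gls{H}$ to collapse the magnitude, and substitute into the rate expression. Your explicit flagging of the uniformity/independence modeling assumptions and of the implicit absorption of the factor $N$ is a fair reading of what the paper leaves unstated, not a deviation in method.
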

\begin{proof}
Using the optimal beamforming derived in \eqref{eqn:ThetaOpt} at \gls{RIS} with discrete phase shifters, we get 
\begin{align}
\gls{ThetaD}=\gls{ThetaO}\odot\gls{ThetaE}\label{eqn:thetaD}
\end{align}
where, $\odot$ denotes the element-wise (Hadamard) product and $\gls{ThetaE}=\diag\{e^{j\theta^e_1},\ldots, e^{j\theta^e_N}\}$ captures the quantization error. From~\eqref{eqn:FD}, we obtain $\theta_i^e\in \left[-\tfrac{\pi}{D},\tfrac{\pi}{D} \right]$. Further, using \eqref{eqn:thetaD}, we get
\begin{align} 
\gls{F}\gls{ThetaD}\gls{G}=\gls{F}\left(\gls{ThetaO}\odot\gls{ThetaE}\right)\gls{G}\overset{(a)}{=}(\gls{ThetaE}\bold{1}_N)^H\diag(\gls{F}\gls{ThetaO}\gls{G})\nonumber
\end{align}
where $\bold{1}_N$ is a $N\times 1$ array and (a) is obtained by using Hadamard-product properties. By denoting $r_i$ as the diagonal elements of $\gls{F}\gls{ThetaO}\gls{G}$, we get
\begin{align} 
\gls{F}\gls{ThetaD}\gls{G}&=\sum_{n=1}^{N} e^{j\theta_i^e}r_i\overset{(b)}{=}N\mathbb{E}\big[e^{j\theta_i^e}r_i\big]\nonumber\\
&\overset{(c)}{=} N\mathbb{E}\big[e^{j\theta_i^e}\big] \mathbb{E}\big[r_i\big]\overset{(d)}{=}N\mathbb{E}\big[\cos(\theta^e_i)\big]\mathbb{E}\big[r_i\big]\nonumber\\
&\overset{(e)}{=}N\sinc\left(\tfrac{\pi}{D}\right)\mathbb{E}\big[r_i\big]
\nonumber
\end{align}
(b) is obtained by considering the law of large numbers~\cite{MISO}, (c) is obtained assuming the independent nature of the random variables, (d) is obtained after integration of the odd symmetrical function $\sin(\theta_i^e)$ over $[-\tfrac{\pi}{D}, \tfrac{\pi}{D}]$, and (e) is based on the assumption of the uniform distribution of $\theta_i^e$ over $[-\tfrac{\pi}{D}, \tfrac{\pi}{D}]$
. 
Note that based on \eqref{eqn:ThetaOpt}, $\mathbb{E}\big[r_i\big]$ also has the phase in the direction of $\angle \gls{H}$, and thus, we get
\begin{align}
||\gls{H}+\gls{F}\gls{ThetaD}\gls{G}||^2=\big||\gls{H}|+N\sinc(\tfrac{\pi}{D})\mathbb{E}[r_i]\big|^2 \label{eqn:l2proof}
\end{align}
This completes the proof of Lemma~\ref{lemma:lem2}.
\end{proof}

\begin{proposition}
As $N\rightarrow \infty$, the achievable data rates with the random phase shifters at \gls{RIS} are similar to that of the case with no \gls{RIS}.
\label{lemma:lem3}
\end{proposition}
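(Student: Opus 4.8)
The plan is to reuse the reasoning in the proof of Lemma~\ref{lemma:lem2}, replacing the bounded quantization error by a phase that is spread uniformly over the whole circle. Fix a user $k$. Under random phase shifting each $\theta_n$ is drawn independently, and uniformly either over $[0,2\pi)$ or over the discrete set of \eqref{eqn:FD}, and in both cases independently of the channel coefficients. As in Lemma~\ref{lemma:lem2}, I would write the reflected part of the equivalent channel as $\gls{F}\gls{Theta}\gls{G}=\sum_{n=1}^{N} e^{j\theta_n} r_n$, where the $r_n$ are the per-element cascaded gains (the analogues of the $r_i$ in Lemma~\ref{lemma:lem2}) and do not depend on $\theta_n$.

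Next I would invoke the law of large numbers exactly as in steps (b)--(c) of the proof of Lemma~\ref{lemma:lem2}, giving $\sum_{n=1}^{N} e^{j\theta_n} r_n = N\,\mathbb{E}[e^{j\theta_n} r_n] = N\,\mathbb{E}[e^{j\theta_n}]\,\mathbb{E}[r_n]$, where the last equality uses independence. The decisive computation is $\mathbb{E}[e^{j\theta_n}]=0$: for the continuous law $\tfrac{1}{2\pi}\int_{0}^{2\pi} e^{j\theta}\,d\theta=0$, and for the discrete law $\tfrac{1}{D}\sum_{d=0}^{D-1} e^{j2\pi d/D}=0$ as a sum of all $D$-th roots of unity. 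Hence $\gls{F}\gls{Theta}\gls{G}\to 0$, the equivalent channel reduces to $\gls{h}\to\gls{H}$, and, by the same SNR-then-rate manipulation that produced \eqref{eqn:l1}, the achievable rate of user $k$ tends to $\log_2\left(1+\gls{Pt}\,|\gls{H}|^2/\bold{\Xi}_k\right)$, which is precisely the rate when no RIS is present. For completeness I would also observe that the cross term between \gls{H} and $\gls{F}\gls{Theta}\gls{G}$ in $\|\gls{h}\|^2$ has expectation zero by the same identity $\mathbb{E}[e^{j\theta_n}]=0$.

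The step I expect to be the main obstacle, and the one needing the most careful wording, is the limit in $\sum_{n=1}^{N} e^{j\theta_n} r_n$: being a sum of $N$ independent zero-mean terms, its magnitude in fact grows like $\sqrt{N}\,(\mathbb{E}|r_n|^{2})^{1/2}$, so the assertion ``$N\,\mathbb{E}[e^{j\theta_n}]=0$'' is really a $0\cdot\infty$ statement. The genuine content is that random phases yield \emph{no coherent array gain}: the reflected power grows only like $N$, whereas the coherently combined reflection of Lemma~\ref{lemma:lem2} grows like $N^{2}$, so against the gain delivered by the optimal design the random-phase contribution is asymptotically negligible and, to leading order in $N$, the link behaves as if only the direct path \gls{H} were present. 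To make this rigorous rather than heuristic I would replace the bare law-of-large-numbers step by a second-moment bound: estimate $\mathbb{E}\big[\|\gls{F}\gls{Theta}\gls{G}\|^{2}\big]=\sum_{n=1}^{N}|r_n|^{2}$, compare it both with $\|\gls{H}\|^{2}$ and with the order-$N^{2}$ coherent term, and conclude via Chebyshev's inequality applied to $\tfrac{1}{N}\sum_{n} e^{j\theta_n} r_n$ that the ratio of the random-phase SNR to the ideal-beamforming SNR of \eqref{eqn:l1} vanishes; this is the precise sense in which random phase shifters recover only the no-RIS performance.
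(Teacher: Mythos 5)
Your argument is essentially the paper's: the paper simply substitutes $D=1$ into the discrete-phase expression \eqref{eqn:l2proof} of Lemma~\ref{lemma:lem2}, so that the residual phase is uniform over $[-\pi,\pi]$ and $\sinc(\pi)=0$, which is exactly your computation $\mathbb{E}[e^{j\theta_n}]=0$ combined with the same law-of-large-numbers factorization. Your closing second-moment/Chebyshev remark is a fair (and sharper) reading of what the $N\,\mathbb{E}[\cdot]$ step really asserts---absence of coherent array gain rather than literal vanishing of the reflected term---but it refines rather than changes the route taken in the paper.
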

\begin{proof}
Note that with random phase shifters at \gls{RIS}, $\theta_i^e \in [-\pi, \pi]$, which is same as considering $D=1$.
Thus, substituting $D=1$ in \eqref{eqn:l2proof}, we get 
\begin{align} ||\gls{H}+\gls{F}\gls{ThetaD}\gls{G}||^2=||\gls{H}||^2
\end{align}
This completes the proof of Lemma~\ref{lemma:lem3}.
\end{proof} 
\subsection{Baseline algorithms considered for the evaluation}
\subsubsection{Cellular system without any \gls{RIS}}
We consider cellular system without any \gls{RIS} as the baseline system for the comparison to understand the achievable gains with the introduction of \gls{RIS}. Further, to ensure the correctness of the system-level modelling along with channel formulation, we have considered the antenna configuration and simulation parameters of the \gls{3GPP} Phase-2 Calibration~\cite{36873} as the baseline and aligned the results with~\cite{36873}.

\subsubsection{Random phase shifters at the \gls{RIS}~\cite{Random, Chandra}}
We evaluate the performance of the \gls{RIS}-assisted cellular systems while applying random phase shifters (i.e., values of $\theta_{n}$ are assigned from complex random distribution with $|\theta_n |\leq 1$) at the \gls{RIS} in each time slot~\cite{Random}. 
%
%
\section{Simulation Results and Discussion}
\label{sec:Results}
\begin{table}
\scriptsize
\centering
\caption{Simulation parameters}
\label{tab:simp}
\begin{tabular}{|m{3.3cm}|m{4.4cm}|c|}
\hline
\textbf{\textit{Parameter}}&\textbf{\textit{Values}}\\\hline \hline
Cellular environment & Urban macro (UMa)~\cite{38901}\\ \hline 
Carrier frequency & 2~GHz\\ \hline 
Path loss model & UMa~\cite{38901}\\ \hline
System Bandwidth & 10 MHz \\ \hline
User drop & 10 per sector \\ \hline 
\gls{BS} transmit power $P_t$& 43~dBm \\ \hline
Wraparound & Geographical distance based \\ \hline
\gls{RIS} drop & 1 per sector\\ \hline
User attachment & Based on received signal power \\ \hline 
Small scale model & \gls{3GPP} FD-MIMO~\cite{38901}\\ \hline
\gls{BS} antenna elements, $M$& $10 \times 2 \times 2 = 40$ elements\\ \hline 
\gls{BS} antenna pattern & \gls{3GPP} sectoral pattern\\ \hline
User antenna elements, $U$ & 1 element \\ \hline
User antenna pattern & omni directional\\ \hline
\gls{RIS} antenna elements, $N$ & $16 \times 16 =256$ elements \\ \hline
\gls{RIS} antenna pattern & passive (only reflections modelled) \\ \hline
Thermal Noise & -174 dBm/Hz\\ \hline
Inter-cell interference & explicitly modelled from both \gls{RIS}s and \gls{BS}s\\ \hline
Other general parameters  & as per \cite{36873} \\ \hline
\end{tabular}
\end{table}

The simulation parameters considered for the system-level evaluation of the proposed algorithms are presented in Table~\ref{tab:simp}. 
We have performed Monte-Carlo simulations in MATLAB and presented the achievable coupling loss, \gls{SINR}, and spectral efficiency with all the considered algorithms in Fig.~\ref{fig:SINR}. 
The coupling loss is calculated as the difference between the transmitted power and received power at each user. 
In Fig.~\ref{fig:SINR}, for the case without \gls{RIS}, the coupling loss  is aligned with the \gls{3GPP} Phase 2 Calibration curves presented in \cite{36873, 38901}, thus, ensuring the correctness of our implementation. As compared to the scenario with no \gls{RIS}, the proposed design with ideal phase shifters achieves a 4 dB gain in the received signal power. The gains achievable with random phase shifters while using single \gls{RIS} are very minimal. Further, the proposed design with 4-bit discrete phase shifters also has significant improvement as compared to the scenario without any \gls{RIS}.  With the 12-bit discrete phase shifters, the coupling loss with the proposed design is similar to the coupling loss with ideal phase shifters.

\begin{figure}[t!]
\centering
\includegraphics[scale=0.38]{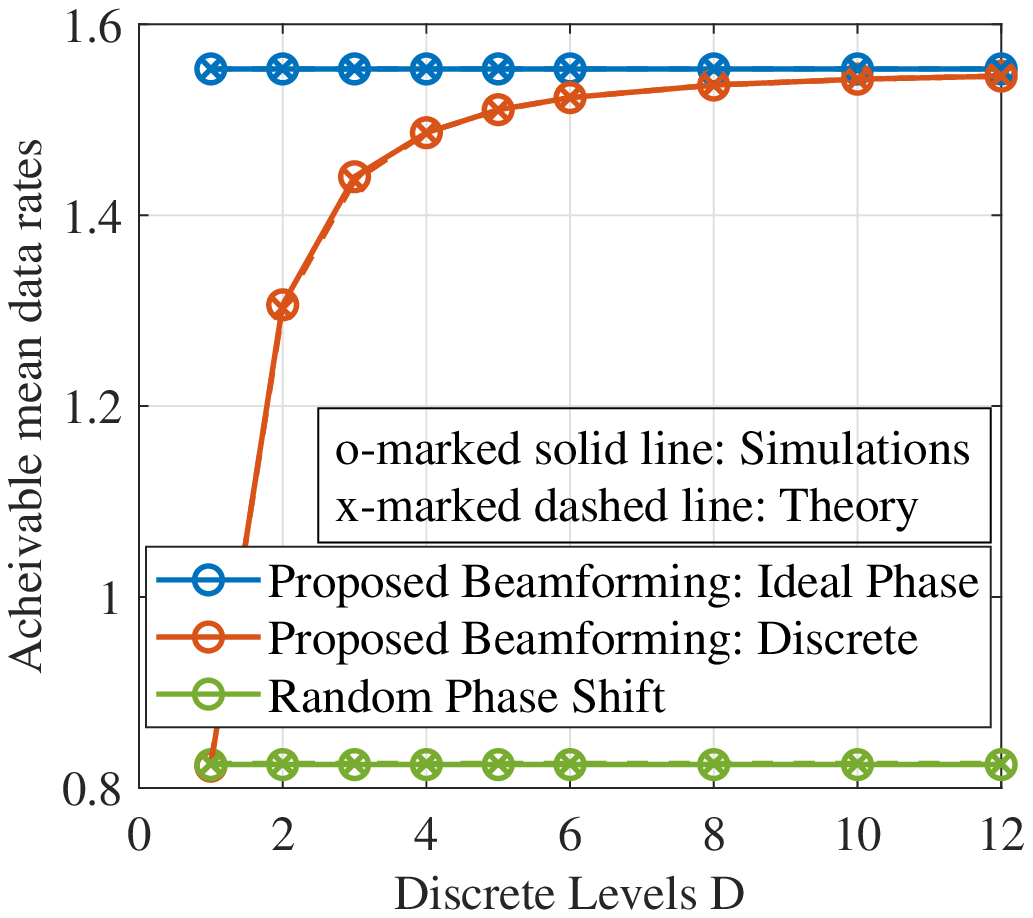}
\includegraphics[scale=0.38]{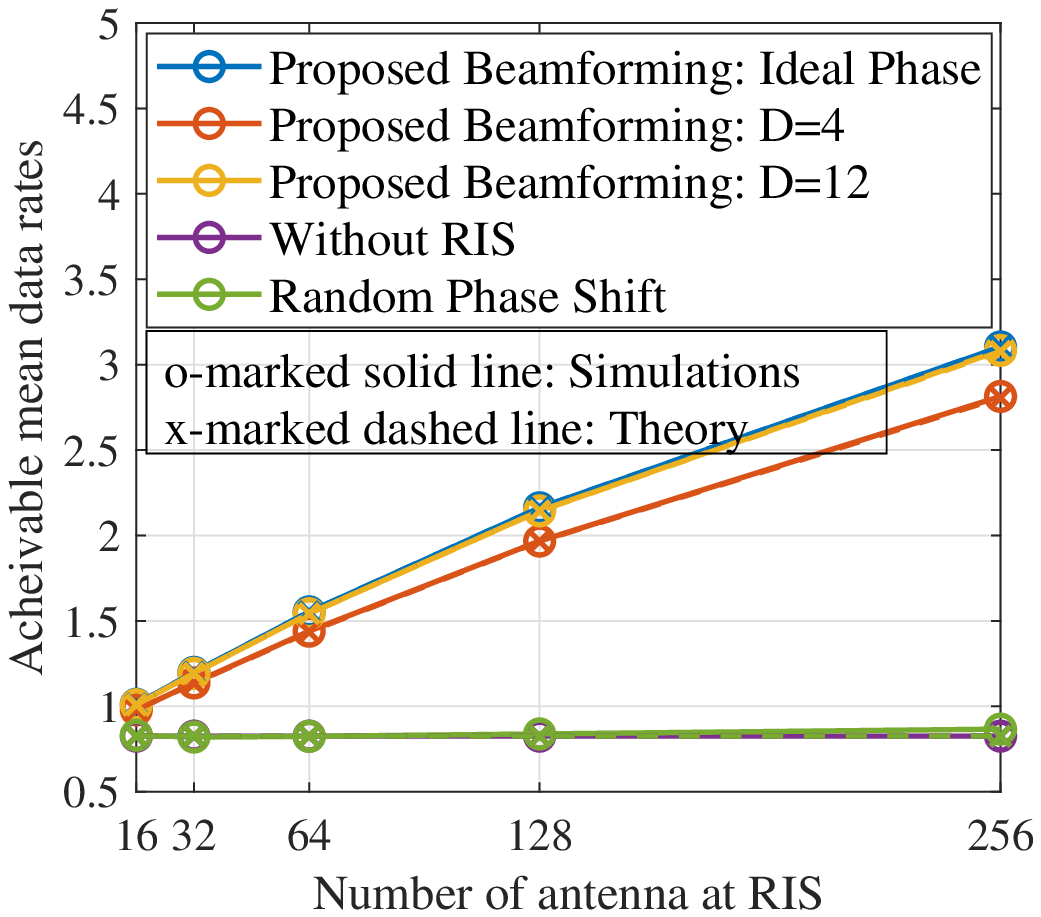}
\caption{Comparison of data rates with varying $D$ and $N$}
\label{fig:SINR}
\end{figure}

We have calculated the \gls{SINR} by explicitly capturing the interference from the neighboring \gls{BS} and \gls{RIS} as follows.
\begin{align}
\text{\gls{SINR}}=\gls{Pmax}\frac{\bold{h}_k\gls{h}}{\text{IN}_k+\bold{\Xi}_k},
\end{align}
where $\text{IN}_k$ represents the interference captured from the neighbouring \gls{BS} and \gls{RIS} transmissions. 
In Fig.~\ref{fig:SINR}, the \gls{SINR} for the scenario without \gls{RIS} is aligned with \gls{3GPP} Phase 2 Calibration curves presented in \cite{36873}, and thus, calibrating the  \gls{SINR} calculation. As compared to the scenario with no \gls{RIS}, the proposed design with ideal phase shifters achieves close to 2~dB improvement in the \gls{SINR}. Note that despite the maximum of 4~dB gain in the coupling loss, we achieve only 2~dB improvements in the \gls{SINR} because of the increase in the inter-cell interference caused by the addition of \gls{RIS} in the neighboring sectors. This reduction in \gls{SINR} also highlights the need for carrying out multi-cell analysis to quantify the practical gains with the inclusion of \gls{RIS}. Further, for the same reason, all the cell-edge users do not observe similar interference, and hence, the pattern of achievable gains in coupling loss and \gls{SINR} differ. Note that the performance of the random phase shifters is close to the scenario without any \gls{RIS} and the performance of the proposed \gls{RIS} with 4-bit discrete phase shifters is also significantly better than the baseline algorithms. With the 12-bit discrete phase shifters, the \gls{SINR}s of the proposed design are similar to the \gls{SINR}s of the proposed design with ideal phase shifters. Further, the gains in spectral efficiency follow a similar pattern as that of the \gls{SINR}s. As shown in Fig.~\ref{fig:SINR}, the proposed design achieves 30\% improvement in spectral efficiency as compared to the baseline algorithms.

In Fig.~\ref{fig:Theorem}, we present the comparison of achievable data rates for varying $D$ and $N$. We show that the simulated results completely align with the theoretical analysis derived in Lemma \ref{lemma:lem2} and \ref{lemma:lem3}. 
Note that with an increase in $N$, the system performance with the random phase shifts does not change and is same as that of the case with no \gls{RIS}. Further,
with the increase in $D$, the performance of the proposed design with discrete phase shifters converges to that of ideal phase shifters. 
\section{Conclusion}
\label{sec:Conclusion} 
In this letter, we have presented in detail the system-level modeling of the \gls{RIS}-assisted cellular systems. Our numerical evaluations outline that only with the system-level evaluations the actual gains achievable with \gls{RIS} can be understood.
We have proposed a novel cellular layout and \gls{RIS} organization which ensures uniform accessibility of the \gls{RIS} to all the active users in the desired sector and creates minimal interference to users in the neighboring cells. We have then proposed a beamforming design for the \gls{RIS}-assisted cellular systems and analyzed the achievable gains with the proposed design by considering ideal and discrete phase shifters. 
Through extensive system-level analysis, we have shown that the proposed design achieves significant improvements in network performance when compared to the baseline algorithms. 
In the future, we plan to evaluate the proposed design on the hardware testbed.
\bibliographystyle{ieeetran}
\bibliography{Bibfile.bib}
\end{document}